\newcommand\blfootnote[1]{%
  \begingroup
  \renewcommand\thefootnote{}\footnote{#1}%
  \addtocounter{footnote}{-1}%
  \endgroup
}
\newcommand{\blind}{0}
\newcommand{\removelatexerror}{\let\@latex@error\@gobble}
\DeclareMathOperator{\Prob}{P}
\newtheorem{theorem}{Theorem}[section]
\newtheorem{lemma}[theorem]{Lemma}
\newtheorem{proposition}[theorem]{Proposition}
\newtheorem{definition}[theorem]{Definition}
\newcommand{\Var}{\mathrm{Var}}
\newcommand{\E}{\mathrm{E}}
\newcommand*{\prob}{\mathrm{P}}
\newcommand{\R}{\mathbb{R}}
\newcommand{\ra}[1]{\renewcommand{\arraystretch}{#1}}
\begin{document}

\def\spacingset#1{\renewcommand{\baselinestretch}%
{#1}\small\normalsize} \spacingset{1}

\title{\bf Unit Testing for MCMC and other Monte Carlo Methods}

\if0\blind
{
  \author{Axel Gandy \and James A. Scott }
  \maketitle
  \blfootnote{Professor Axel Gandy is Chair in Statistics, Department of Mathematics, Imperial College London, SW7 2AZ (e-mail: \href{mailto:a.gandy@imperial.ac.uk}{a.gandy@imperial.ac.uk}). James A. Scott is PhD Candidate, Department of Mathematics, Imperial College London, SW7 2AZ (e-mail: \href{mailto:james.scott15@imperial.ac.uk}{james.scott15@imperial.ac.uk}).}
} \fi

\if1\blind
{
  \bigskip
  \bigskip
  \bigskip
  \maketitle
  \medskip
} \fi

\bigskip 
\begin{abstract}
  We propose approaches for testing implementations of Markov Chain
  Monte Carlo methods as well as of general Monte Carlo methods. Based on statistical hypothesis tests, 
  these approaches can be used in a unit 
  testing framework to, for example, check if individual steps in a Gibbs 
  sampler or a reversible jump MCMC have the desired invariant distribution.  Two exact tests
   for assessing whether a given Markov chain
  has a specified invariant distribution are discussed.  These
  and other tests of Monte Carlo methods can be embedded into a sequential method that
  allows low expected effort if the simulation shows the desired
  behavior and high power if it does not. Moreover, the false
  rejection probability can be kept arbitrarily low. For general Monte
  Carlo methods, this allows testing, for example, if a sampler has a
  specified distribution or if a sampler produces samples with the
  desired mean.  The methods have been implemented in the R-package
  mcunit. 
\end{abstract}

\noindent%
{\it Keywords:} Exact Test, Markov Chain Monte Carlo, Rank Test, Sequential Test, Unit Test
\vfill

\newpage
\spacingset{1.45}

\section{INTRODUCTION}
\label{sec:intro}

Markov chain Monte Carlo (MCMC) methods are the main workhorse of
Bayesian statistics. These methods are used to approximate posterior
expectations which are otherwise analytically intractable. While there
exist numerous diagnostics to assess convergence of the Monte Carlo
estimates to \textit{some} value, few articles address whether
they converge to the \textit{correct} values \citep{geweke_2004, cook_2006, talts_validating_2020}. 

MCMC often requires difficult derivations of
marginal and conditional distributions \citep{geman_geman_1984}, and
derivatives of log densities \citep{roberts_stramer_2002, duane_1987,
  girolami_calderhead_2011}. Increasingly sophisticated algorithms raise the scope for 
analytic errors in these derivations as well as for implementation errors. Testing
for such errors should be an integral and routine part of the workflow
of any Bayesian analysis using MCMC. This article proposes new hypothesis
tests to accomplish this. These tests are unique in being exact; they
have guaranteed false rejection probability which can in principle be made as small
as desired.

MCMC algorithms yield dependent samples, limiting the usefulness of
existing procedures for detecting sampler errors \citep{geweke_2004, cook_2006, talts_validating_2020}. This is because the exact distribution of
the test statistics under the null measure is not known, and as a
consequence, there is no guarantee over the false rejection probability. This has
important practical implications. The obvious consequence is that a
researcher applying the methods cannot always determine whether the test
failed because of dependency between samples, or alternatively because
of actual sampler errors that require further investigation. This could lead to a waste
 of valuable researcher time if they try to find
errors that do not exist. Alternatively, errors could go undetected as they are
explained away by correlation between samples.

One solution to sample dependency is to thin the chain, i.e.\ to
subsample at given intervals to obtain approximately
independent samples. This is the technique suggested in \citet{talts_validating_2020}.
The integrated autocorrelation time is
the number of steps required for a chain to forget
its initial state. If this can be estimated well, then subsampling can be used to yield
effectively independent samples. Unfortunately, reliable estimation of the 
quantity is widely considered challenging \citep{sokal_monte_1997}. The target
distribution is often multi-modal and incomplete sampling can
lead to underestimating the quantity. Even supposing access to a good
estimate, it will often be too large to be of practical
use.

Many sampler errors can be detected in fewer
iterations than required for independent samples. Because independence 
is not required, we can detect these faster than alternate
methods, making the tests more efficient in many practical scenarios.
The two suggested tests use ideas already present in the
literature. One test relies on ideas suggested in \citet{besag_clifford_1989}. 
The theoretical results of this paper are extended by allowing  ties in
the observations and  a more general definition of ranks.  The
other test generalizes a method proposed in \citet{gandy_2016} to test a
specific sampler.

We envisage that the new methods would be particularly useful in unit
testing of MCMC and Monte Carlo methods. Unit testing is a standard
part of the software development process
\citep{runeson_survey_2006}. Individual units of a piece of software are
being tested, in order to demonstrate their functionality.
Frameworks for implementing unit tests are available in many
programming languages \citep{wikipedia_contributors_list_2019}, for example
\cite{wickham_testthat_2011} in the R language.  Tests are generally
re-run after changes to the software, to ensure continued
functionality. There can be a substantial number of tests in any piece
of software - so it is important to keep the computational effort
reasonable for a test that passes. Once a test fails, debugging of the
code will usually be needed to pinpoint (and fix) the source of the
error.

When used for unit testing, the tests for MCMC chains
could be used for individual types of updates of e.g.\ a Gibbs sampler
or of a reversible jump MCMC sampler. The tests are constructed to
test if the chain (or the step of the chain) has the correct invariant
distribution. It is not testing if the chain is recurrent.

When using the above-mentioned methods or other (goodness-of-fit)
tests based on simulated data in unit testing, one faces a trade-off
between the false rejection rate, the power, and the sample size. 
Typically, one would like to have a
(very) low false rejection probability, as investigating potential errors is
time-consuming. Also, as mentioned above, the computational effort if
no errors are present should be low. This immediately places bounds on
the alternatives that one can detect.  We present a sequential method
that improves the position in this regard. It sequentially executes
the test and repeats the test only if the test yields moderate
evidence for departure.
This sequential approach is useful for general Monte Carlo tests and
not just the two MCMC approaches. 

Previous methods introduced to tackle this problem are discussed in
Section \ref{sec:lit}. Section \ref{sec:exact} proposes the new exact
tests for MCMC samplers.  Section \ref{sec:sequential} discusses how
to embed exact tests into a sequential testing procedure to increase
power and reduce the false rejection rate. As mentioned, this is
useful for unit testing and applies to more general Monte Carlo
methods.  Section \ref{sec:sims} presents a simulation study comparing
our approach to previous methods. Conclusions are summarized in Section
\ref{sec:conclusion}.  The tests have been implemented in an easy to
use R-package that immediately slots into the existing unit testing
framework for R \citep{wickham_testthat_2011}. This is available at
\url{https://bitbucket.org/agandy/mcunit}. Proofs can be found in Appendix \ref{sec:proofs}.

\subsection{Related Literature for Testing Samplers}
\label{sec:lit}


\citet{geweke_2004} was the first article to formally consider the
problem of detecting errors in MCMC samplers. Their method
compares samples obtained using two techniques for drawing 
from the joint distribution of parameters and data. The first
simulates directly from the generative model. The second is a Gibbs sampler, alternating
between drawing parameters given data (using the MCMC sampler) and
data given parameters. Z-tests are used to compare estimates of moments
of the joint distribution.  The downside of this approach is that the
Gibbs sampler will generate dependent samples. In practical
applications, the parameters and data can be highly correlated, and a high 
computational effort is required to control the false rejection rate.

\citet{cook_2006} propose tests based on sampled posterior quantiles in the Bayesian framework. 
The authors crucially observe that drawing $\theta$ from the prior and $y$ from the
 likelihood implies that $\theta$ is an exact sample from the posterior given $y$. 
 A sample $\theta_{1:L}$ from this posterior distribution is simulated using the sampler to be tested, 
 and the empirical quantile of $\theta$ is computed among this sample. Unfortunately, the suggested limiting distribution of this quantile is incorrect \citep{gelman_2017}, and the 
proposed tests are not applicable when there is sample dependency, as is the case with MCMC.

\citet{talts_validating_2020} proceed identically to \citet{cook_2006},
but instead of using the empirical quantile of $\theta$ among $\theta_{1:L}$, they compute its 
rank. If the samples are independent and continuous 
then the rank statistic is exactly uniform on $\{1,...,L\}$. Repeating this procedure 
multiple times gives a sample of ranks that can be compared to this uniform distribution. 
Rather than constructing a formal test, the authors advocate
visually assessing goodness of fit using histograms.
The authors propose using thinning to deal with dependent samples when using an iterative simulator like
MCMC. Unfortunately, this leaves the method prone to the
aforementioned problems associated with subsampling Markov chains.

\section{EXACT TESTS FOR ERRORS IN MCMC SAMPLERS}
\label{sec:exact}

In this section, we describe two tests for detecting sampler errors for
MCMC samplers. Analogous tests for simple Monte Carlo methods would be
standard statistical tests such as goodness-of-fit tests.

Assume parameters $\theta \in \Theta$ and data $y \in \mathcal{Y}$ are
modeled as a product of prior and likelihood
$\pi(\theta)p(y\mid\theta)$, and that one can independently draw
parameters from the prior and data from the likelihood.

Further, assume
that the MCMC implementation is designed to work for all possible data
$y\in \mathcal{Y}$.
In a Bayesian analysis
we would observe $y_{obs}$ and construct a Markov chain with kernel
$K_{y_{obs}}$ to estimate expectations of functions with respect to the posterior
$\pi(\cdot \mid y_{obs})$. If the data is implemented as an argument, then the sampler is a
collection of kernels $\{K_y : y \in \mathcal{Y}\}$
such that each $K_y$ is expected to have invariant distribution
$\pi(\cdot \mid y)$.

This motivates the null hypothesis that $K_y$ is
$\pi(\cdot \mid y)$-invariant for all $y \in \mathcal{Y}$.  The tests
do not specifically check $K_{y_{obs}}$, but rather the viability of
the sampler over all possible data values. For example, if only the
kernels corresponding to a null set of data has errors, then the tests
would not be able to detect this.

The null hypothesis will be false if there are errors in the sampler,
broadly characterized as either \textit{design} or
\textit{implementation} errors. Design errors correspond to having a
wrong model for the sampler, and may include mistakes in derived
quantities required for sampling, or a mistake in understanding how
a particular sampler works. Implementation errors refer to an
incorrect execution of a given design, regardless of whether that
design is correct. These are likely to be errors in the
written code.

Both proposed methods are essentially goodness of fit tests which
compare a computed sample of statistics to another distribution.  By
exact, we mean to say that the distribution of the sample is exactly
known under the null hypothesis. We do not mean to say that the
p-value is computed exactly. In practice, cheaper inexact methods may
be used to compute the p-values; for example, using a $\chi^2$ test in
the discrete case. This is of little consequence because the sample
size can be explicitly controlled in the test. For a large enough
sample, the p-value will be as if exact.

Our tests are not designed to investigate the mixing behavior or the
ergodicity of the Markov chain. A Markov chain can be correctly
implemented yet slow mixing. Researchers wishing to diagnose slow
mixing can instead refer to the vast literature on the subject
\citep{cowles_1996}. Properties required for full ergodicity,
including irreducibility and aperiodicity, are typically easy to
establish for continuous distributions, and may require proof
otherwise.

Section \ref{sec:alg1} details a very simple test which uses the
Markov chain to yield samples that should be indistinguishable from
independent samples drawn from the generative model under the
null. This idea generalizes a method described in the supplementary
material of \cite{gandy_2016} to test a specific MCMC
sampler. Section \ref{sec:alg2} considers a more elaborate test based
on uniformity of rank statistics. This uses ideas from
\cite{besag_clifford_1989}.

\subsection{Exact Two-Sample Tests}
\label{sec:alg1}

This method samples from the model in two different ways. The first simply samples 
directly using the generative model, while the second starts by sampling directly, but then propagates the
sample parameters $L$ steps forward using the MCMC sampler. Formally samples are generated with the
sequence of steps
\begin{align*}
\theta' & \sim \pi(\cdot),\\
y'  & \sim p( \cdot \mid  \theta'),\\
\theta & \sim  K^{L}_{y'}(\theta', \cdot).
\end{align*}
$\theta'$ is a perfect sample from $\pi(\cdot \mid y')$, and so initiating the chain at $\theta'$ implies that $\theta$ is also exactly from the posterior under the null hypothesis. Since $y'$ is marginally correct, this implies that $\theta$ is unconditionally a sample from the prior. Moreover, if the procedure is repeated, each sample will be independent. Samples generated this way are described as \textit{fitted} samples, while those generated directly from the model are \textit{direct} samples. Algorithm \ref{alg:prior} details the generation of these samples.

\begin{figure}
\removelatexerror
\begin{algorithm}[H]
\caption{\small General algorithm to perform a two-sample test as described in Section \ref{sec:alg1}.\label{alg:prior}}
\For{$n = 1$ to $N_1$}{
Draw $\tilde{\theta} \sim \pi(\cdot)$\;
Draw $\tilde{y}_n \sim p(\cdot \mid  \tilde{\theta})$\;
Run Markov chain $L$ steps from  $\tilde{\theta}$  to obtain $\tilde{\theta}_{n} \sim K^{L}_{\tilde{y}_n}(\tilde{\theta}, \cdot)$\;
}
\For{$n = 1$ to $N_2$}{
Draw $\theta_{n} \sim \pi(\cdot)$\;
Draw $y_n \sim p(\cdot \mid  \theta_{n})$
}
Compare independent samples $\{(\tilde{\theta}_{n}, \tilde{y}_n)\}$ and $\{(\theta_{n},y_n)\}$\;
\end{algorithm}
\footnotesize
NOTES: $N_1$ and $N_2$ are the number of fitted and direct samples respectively.
\end{figure}

Any appropriate goodness of fit test can be employed to compare the fitted and direct samples. 
Under the null, these are independent and identically from the joint distribution of data and parameters. The most
appropriate test to use will depend on the alternative hypotheses
considered, and so we avoid prescribing a specific test here. 
If the samples are continuous, one may use the two-sample Kolmogorov-Smirnov
test, the Cramer-von Mises test or the Wilcoxon
signed rank test. If discrete, a likelihood ratio test or the Pearson's $\chi^2$-test could be used.
If the form of the prior is particularly simple there may be no need to sample from it, and one could instead use a 
parametric one-sample test.


Algorithm \ref{alg:prior} is similar to that proposed by \citet{geweke_2004}, the key difference being that the data is resampled before each MCMC step. This guarantees independence of samples, which will be useful for controlling the false rejection rate whenever there is high correlation between data and parameters.

The method can be extended by iteratively updating both data and parameters. Line 4 could be replaced by repeating, $L$ times, the step $\tilde{\theta}\sim K_{\tilde{y_n}}(\tilde{\theta}, \cdot)$ followed by $\tilde{y}_n \sim p(\cdot \mid \tilde{\theta})$. This is just a Gibbs sampler and letting $\tilde{\theta}_n$ be the final parameter, $(\tilde{\theta}_n, \tilde{y}_n)$ clearly has the same distribution as  $(\theta_n, y_n)$ under the null. This extension may improve power in certain circumstances, as is shown in Section \ref{sec:second_sim}.

\subsection{An Exact Rank Test}
\label{sec:alg2}

Algorithm \ref{alg:prior} may suffer low power for detecting certain errors. Sometimes, there may 
be mistakes in each conditional which when aggregated are undetectable in the joint distribution 
of data and parameters. An example of this is shown in Section \ref{sec:first_sim}. Here a test is proposed 
that, similar to \citet{talts_validating_2020}, compares a sample of
rank statistics to the uniform distribution. Each statistic is
computed using multiple samples from a single posterior distribution,
and so it may better detect divergences that might be averaged out in the joint.

This comes at the expense of requiring each Markov kernel $K_y$ to be
reversible with respect to $\pi(\cdot \mid y)$. This is not
particularly restrictive: most MCMC algorithms are reversible by
design, because showing reversibility is the easiest way to
prove invariance with respect to a target distribution. Many of the most commonly
used algorithms are reversible, including Metropolis Hastings, Hamiltonian
Monte Carlo and slice sampling. Although samplers using composition of
kernels are not reversible (for example, systematic scan Gibbs
sampling), the constituent kernels can still be tested if they are each
individually reversible.

Rank statistics which break ties in a
vector must be considered, so that the null distribution of
the rank is exactly uniform. The generalization is as
follows.  In the following, $S_n$ is the set of permutations of
$\{1,\dots,n\}$, i.e.\ the set of vectors $s\in \{1,\dots,n\}^n$ such
that $s_i\neq s_j$ for all $i\neq j$.

\begin{definition} \label{def:rank} A function $R: \Theta^n \to S_n$ is an ordinal ranking for vectors $\theta_{1:n} \in \Theta^n$.
\end{definition}
Any function which can assign the same rank to two elements of a vector $\theta_{1:n}$ does not satisfy Definition \ref{def:rank}. 

The general idea behind the test is as follows. First, draw $\theta$ from the prior and $y$ from the likelihood. The kernel $K_y$ is used to draw samples from the posterior, and the rank of $\theta$ among these samples is computed. Replicating this procedure multiple times, the resulting rank statistics will be exactly uniform under the null. Any of a number of goodness of fit tests can then be used. Algorithm \ref{alg:ranks} details the generation of a single rank statistic.

How the posterior samples are drawn has important implications for the uniformity of the rank statistics. Imagine, for example, using the MCMC sampler to realize a Markov chain $\theta_{1:L}$ initiated at $\theta_1 = \theta$. Given some ordinal ranking $R$, the null distribution of $R_1(\theta_{1:L})$ is generally not uniform on $\{1,\ldots,L\}$. Although each element of the chain is of course marginally $\pi(\cdot \mid  y)$, the chain has Markovian dependence and its components are not exchangeable. 

Assuming only reversibility, this can be rectified using a technique suggested in
\citet{besag_clifford_1989}, which is extended here to allow for possible ties in the Markov chain. 
Instead of initiating the chain
from $\theta$, sample $M$ uniformly in $\{1,...,L\}$ and let $\theta_M = \theta$.
Then run the chain twice, once forward $L-M$ steps from $\theta_M$, and then backwards $M-1$
steps from $\theta_M$. By Proposition \ref{prop:uniform_null},
$R_M$ will be exactly uniform under the null. Before giving this proposition, a generalization of the
Lemma of \cite{besag_clifford_1989} is stated.

\begin{lemma} \label{lemma:besag} Suppose $R(\theta_{1:L})$ is a random vector with
  values in $S_L$. If $M \sim \text{Uniform}\{1,...,L\}$ independently
  of $R(\theta_{1:L})$ then $R_M(\theta_{1:L})$ is uniformly distributed on $\{1,...,L\}$.
\end{lemma}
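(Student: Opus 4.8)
The plan is to condition on $M$ and use its independence from $R(\theta_{1:L})$, together with the elementary fact that each value in $\{1,\dots,L\}$ appears exactly once among the components of a permutation. Write $R(\theta_{1:L}) = (R_1,\dots,R_L)$, so that $R_M(\theta_{1:L}) = R_M$; note that this is well defined and discrete since $R(\theta_{1:L})$ takes values in the finite set $S_L$, whatever the distribution of $\theta_{1:L}$.

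For any $k \in \{1,\dots,L\}$, I would expand by the law of total probability over the (finitely many, equally likely) values of $M$,
$$\Prob(R_M = k) = \sum_{m=1}^{L} \Prob(M = m)\,\Prob(R_M = k \mid M = m) = \frac{1}{L}\sum_{m=1}^{L} \Prob(R_m = k),$$
where the second equality uses $\Prob(M=m) = 1/L > 0$ and the independence of $M$ from $R(\theta_{1:L})$, which gives $\Prob(R_M = k \mid M = m) = \Prob(R_m = k \mid M = m) = \Prob(R_m = k)$. It then remains to observe that, because $(R_1,\dots,R_L) \in S_L$ is a permutation of $\{1,\dots,L\}$, exactly one of the events $\{R_m = k\}$, $m = 1,\dots,L$, occurs at every sample point; summing the corresponding indicators and taking expectations yields $\sum_{m=1}^{L}\Prob(R_m = k) = 1$, whence $\Prob(R_M = k) = 1/L$, which is the claim.

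There is no genuine obstacle here; the only point requiring a moment's care is that the conditioning be legitimate, and this is immediate since $M$ is discrete with strictly positive masses and independent of $R(\theta_{1:L})$, so no measure-theoretic subtleties arise. (Equivalently, one can condition on $R(\theta_{1:L})$ instead: for a fixed permutation $s \in S_L$ the map $m \mapsto s_m$ is a bijection of $\{1,\dots,L\}$, so $s_M$ is uniform whenever $M$ is, and averaging over the distribution of $R(\theta_{1:L})$ finishes the proof in the same way.)
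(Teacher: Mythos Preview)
Your proof is correct and follows essentially the same approach as the paper: condition on $M$, use independence to replace $\Prob(R_M=k\mid M=m)$ by $\Prob(R_m=k)$, and then use that the events $\{R_m=k\}$, $m=1,\dots,L$, partition the sample space since $R$ takes values in $S_L$. Your additional remarks on the legitimacy of the conditioning and the alternative argument via conditioning on $R(\theta_{1:L})$ are sound but not needed for the core argument.
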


\begin{proposition} \label{prop:uniform_null} Let $R_M(\theta_{1:L})$ be the rank
  statistic returned from Algorithm \ref{alg:ranks}. If for every
  $y$ the kernel $K_y$ is $\pi(\cdot\mid y)$-reversible then $R_M(\theta_{1:L}) \sim \text{Uniform}\{1,...,L\}$.
\end{proposition}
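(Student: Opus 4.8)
The plan is to condition throughout on the data $y$ and to show that, given $y$, the random index $M$ is independent of the vector $\theta_{1:L}$ produced by Algorithm \ref{alg:ranks}; once this is established, Lemma \ref{lemma:besag} applied conditionally on $y$ gives that $R_M(\theta_{1:L})$ is uniform given $y$, and integrating over $y$ yields the unconditional statement. The conditioning step itself is clean and does not even use the null hypothesis: since $\theta$ is drawn from $\pi(\cdot)$ and $y$ from $p(\cdot\mid\theta)$, Bayes' rule gives that the conditional law of $\theta$ given $y$ is exactly $\pi(\cdot\mid y)$, so the chain in Algorithm \ref{alg:ranks} is anchored at $\theta_M=\theta\sim\pi(\cdot\mid y)$.

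The core step is to identify the conditional law of $\theta_{1:L}$ given $(y,M=m)$. Working with transition densities $k_y$ for exposition (the kernel version being analogous), this law has density
\[
\pi(\theta_m\mid y)\;\prod_{j=2}^{m}k_y(\theta_j,\theta_{j-1})\;\prod_{i=m}^{L-1}k_y(\theta_i,\theta_{i+1}),
\]
where the first product comes from the $M-1$ backward steps and the second from the $L-M$ forward steps; here ``running the chain backwards'' means applying $K_y$ again, which is legitimate precisely because a $\pi(\cdot\mid y)$-reversible kernel is its own time reversal. I would then invoke detailed balance $\pi(\theta\mid y)k_y(\theta,\theta')=\pi(\theta'\mid y)k_y(\theta',\theta)$ repeatedly to move the anchor from position $m$ down to position $1$: peeling off the $j=m$ factor turns $\pi(\theta_m\mid y)k_y(\theta_m,\theta_{m-1})$ into $\pi(\theta_{m-1}\mid y)k_y(\theta_{m-1},\theta_m)$, and iterating this down to $j=2$ rewrites the display as $\pi(\theta_1\mid y)\prod_{i=1}^{L-1}k_y(\theta_i,\theta_{i+1})$. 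This is the law of a stationary Markov chain run with kernel $K_y$ from initial distribution $\pi(\cdot\mid y)$, and crucially it does not depend on $m$.

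Since the conditional law of $\theta_{1:L}$ given $(y,M=m)$ is free of $m$, conditionally on $y$ the vector $\theta_{1:L}$ is independent of $M$, hence so is $R(\theta_{1:L})$ for any ordinal ranking $R$ (Definition \ref{def:rank}). Because $M\sim\text{Uniform}\{1,\dots,L\}$ conditionally on $y$, Lemma \ref{lemma:besag} applied to the conditional law given $y$ gives that $R_M(\theta_{1:L})$ is $\text{Uniform}\{1,\dots,L\}$ given $y$; as this conditional law does not depend on $y$, the same holds unconditionally. No special treatment of ties is required, since that issue is entirely absorbed into the use of ordinal rankings together with the ties-tolerant version of Lemma \ref{lemma:besag}.

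I expect the only real obstacle to be making the ``move the anchor'' step rigorous at the level of Markov kernels rather than densities, i.e.\ verifying the reversal identity for the joint law of the whole vector — most cleanly by an induction on $m$ using the two-set detailed-balance relation $\int_A K_y(x,B)\,\pi(dx\mid y)=\int_B K_y(x,A)\,\pi(dx\mid y)$, or simply by assuming a dominating measure so that the density computation above is valid. Everything else is bookkeeping.
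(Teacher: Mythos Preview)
Your proposal is correct and follows essentially the same route as the paper's proof: both arguments use detailed balance of $K_y$ to show that the conditional law of $\theta_{1:L}$ given $(y,M=m)$ does not depend on $m$, and then invoke Lemma \ref{lemma:besag}. The paper establishes this via the one-step identity $f(\theta_{1:L}\mid M=m,y)=f(\theta_{1:L}\mid M=m-1,y)$, whereas you telescope all the way to $m=1$ and identify the common law as that of a stationary chain; one small point to make explicit is that Algorithm \ref{alg:ranks} allows $R$ itself to be random (e.g.\ random tie-breaking), so you should also cite the algorithm's assumption that $R$ is chosen independently of $M$, as the paper does.
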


\begin{figure}
\removelatexerror
\begin{algorithm}[H]
Draw $M \sim \text{Uniform}\{1, ..., L\}$\;
Draw $\theta_{M} \sim \pi(\cdot)$\;
Draw $y \sim p( \cdot \mid  \theta_{M})$\;
Choose an ordinal ranking $R$ such that $R$ and $M$ are independent\;
\For{$l = 1$ to $M-1$}{
$\theta_{M-l} \sim K_y(\theta_{M-l+1}, \cdot)$;
}
\For{$l = M+1$ to $L$}{
$\theta_{l} \sim K_y(\theta_{l-1}, \cdot)$;
}
\Return $R_M(\theta_{1:L})$
\caption{\small Computing a rank statistic using the method described in Section \ref{sec:alg2}.}
\label{alg:ranks}
\end{algorithm}
\footnotesize
NOTES: $L$ is the number of MCMC samples to use.
\end{figure}

The canonical example of an ordinal ranking that we have in mind first
maps each component of $\theta_{1:n}$ to the real line with a function $h:\Theta \to \R$,  computes the ranks of
$h(\theta_1), \ldots, h(\theta_n)$, breaking ties in some order. 

Importantly, the ordinal ranking in Algorithm \ref{alg:ranks} can be chosen based on any quantity which is independent of $M$. This allows, for example, randomly breaking ties as follows. If you had a collection of `canonical' rankings, with the only difference between them being that the order of breaking ties is different, you could uniformly select a ranking from this set, thus breaking ties randomly. The ranking could also be selected based on $y$ because it is independent of $M$. Specifically, the function $h$ could be the likelihood function mapping $\theta \mapsto p(y \mid \theta)$. This particular statistic is used in the simulation study of Section \ref{sec:first_sim}.

The proposed test may be generalized. In lines 5 and 7 of Algorithm \ref{alg:ranks} one could, with some fixed probability, update $y$ given the current value of $\theta$ rather than updating $\theta$ using the Markov kernel. This would give samples on the joint space which can be compared using an ordinal ranking $R: (\Theta \times \mathcal{Y})^L \mapsto \mathcal{S}_L$. Proposition \ref{prop:uniform_null} still holds because this is simply testing a random scan Gibbs sampler on the joint space of parameters and data, which is of course reversible under the assumptions of the proposition. This can improve power in detecting certain subtle errors, as is shown in Section \ref{sec:second_sim}.

Another extension is to replace $K_y$ by $K_y^k$ for some $k > 1$. This has the effect of thinning the chain and reducing autocorrelation, and will be useful to increase power against more subtle alternatives. The important point, however, is that such thinning is not required for the null distribution of the ranks to be correct. Extending Algorithm \ref{alg:ranks} to multiple testing is simple.

\section{SEQUENTIAL IMPLEMENTATION FOR UNIT TESTS}
\label{sec:sequential}

Unit tests should have a low false rejection probability and a reasonable computational effort if the sampler works. Moreover, the tests ought to have high power and if errors exists, we should be willing to spend a larger effort detecting them. Here, a sequential testing procedure is proposed which achieves these goals. 

Algorithm \ref{alg:sequential} immediately rejects the null under very low p-values, does not reject
the null for higher p-values and continues simulations for p-values
that are low, but not extremely low. The method runs for a maximum of $k$ steps, and multiplies the sample size by $\Delta$ after the first iteration, which serves to detect errors more easily in subsequent iterations.

There are a large number of possible variations on Algorithm \ref{alg:sequential}. For example, one could define the probability of early rejection via ``spending sequences'' as in \cite{gandy_sequential_2009}. If using Algorithms \ref{alg:prior} or \ref{alg:ranks} to generate the p-values, instead of adjusting the number of chains (through $\Delta$), one could instead increase the amount of thinning within chains. This would also raise the power in subsequent iterations. 

\begin{figure}
\removelatexerror
\begin{algorithm}[H]
  $\beta_1$=$\alpha/k$\;
  $\gamma=\beta_1^{1/k}$\;
  \For{$i = 1$ to $k$}{
    $p^{(i)}$=vector of p-values from one of the algorithms (sample size $n$)\;
    $q_i=\min p^{(i)} * d$\;
    \lIf{$q_i\leq\beta_i$}{\KwRet{fail}}
    \lIf{$q_i>\gamma+\beta_i$}{break}
    $\beta_{i+1}=\beta_i/\gamma$\;
    \lIf{$i=1$}{$n=\Delta n$}
  }
  \KwRet{OK}\;
\caption{\small Sequential wrapper around the methods.}
 \label{alg:sequential}
\end{algorithm}
\footnotesize
NOTES: $d$, is the dimension of the p-value vectors $p^{(i)}$; $\alpha$, is the overall desired false rejection rate; $k$, the maximum number of sequential steps; $\Delta$, the factor by which to multiple the sample size after the first iteration.
\end{figure}

As mentioned, the proposed method has an overall false rejection rate of at most $\alpha$, as the following theorem shows. 
\begin{theorem} \label{thm:size_bound}
    Suppose $p^{(1)},\dots,p^{(k)}$ are independent d-variate random
    vectors with values in $[0,1]^{d}$.  If
    $\Prob \{ p^{(i)}_j\leq p \}\leq p$ for all $p\in [0,1]$ and
    $i= 1,\dots,k$, $j= 1,\dots, d$ then
    $\Prob \{\text{fail} \}\leq \alpha$.
\end{theorem}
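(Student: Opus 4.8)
My plan is to collapse the $d$ coordinates of each p-value vector into a single super-uniform statistic, rewrite the failure event as the disjoint union over the first step at which rejection occurs, and then bound the resulting sum by a one-line backward recursion that genuinely uses the upper (``break'') boundary. The point worth flagging up front is that the naive bound $\Prob\{\text{fail}\}\le\sum_{i}\Prob\{\text{fail at step }i\}\le\sum_i\beta_i$ is hopeless: with $\beta_{i+1}=\beta_i/\gamma$ and $\gamma<1$ the $\beta_i$ grow, and $\sum_i\beta_i$ is of order $\gamma$, vastly larger than $\alpha$. The ``break'' step is what saves us.

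First I would record the deterministic facts. From $\beta_{i+1}=\beta_i/\gamma$ and $\gamma^k=\beta_1=\alpha/k$ we get $\beta_i=\gamma^{k-i+1}$ for $i=1,\dots,k$ (so $\beta_k=\gamma$), and we may assume $\alpha<1$, so $0<\gamma<1$. Writing $q_i=d\min_j p^{(i)}_j$, a union bound over $j$ together with the assumed super-uniformity of the marginals gives $\Prob\{q_i\le t\}\le t$ for all $t\ge 0$. Define $G_i=\{q_i\le\beta_i\}$ and $C_i=\{\beta_i<q_i\le\gamma+\beta_i\}$; then $\Prob\{G_i\}\le\beta_i$, and since $G_i$ and $C_i$ are disjoint, $\Prob\{G_i\}+\Prob\{C_i\}=\Prob\{q_i\le\gamma+\beta_i\}\le\gamma+\beta_i$. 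Because $q_i$ depends only on $p^{(i)}$ and the $p^{(i)}$ are independent, the events $C_1,\dots,C_{i-1},G_i$ are mutually independent.

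Inspecting Algorithm \ref{alg:sequential}, the run returns ``fail'' exactly when there is an $i$ with $C_1,\dots,C_{i-1}$ and $G_i$ all occurring, so $\Prob\{\text{fail}\}=\sum_{i=1}^k\Prob\{G_i\}\prod_{j<i}\Prob\{C_j\}$ by independence. Putting $P_i=\sum_{l=i}^k\Prob\{G_l\}\prod_{j=i}^{l-1}\Prob\{C_j\}$ (this is the probability of an event, so $P_i\le 1$) with $P_{k+1}=0$ and $P_1=\Prob\{\text{fail}\}$, one has $P_i=\Prob\{G_i\}+\Prob\{C_i\}P_{i+1}$. Substituting $\Prob\{C_i\}\le\gamma+\beta_i-\Prob\{G_i\}$ and then $\Prob\{G_i\}\le\beta_i$ (legitimate since $P_{i+1}\le 1$ makes the coefficient $1-P_{i+1}$ of $\Prob\{G_i\}$ nonnegative), the $\beta_iP_{i+1}$ terms cancel and leave $P_i\le\gamma P_{i+1}+\beta_i$. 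Unrolling and using $\gamma^{i-1}\beta_i=\gamma^{i-1}\gamma^{k-i+1}=\gamma^k$ gives $P_1\le\sum_{i=1}^k\gamma^{i-1}\beta_i=k\gamma^k=k\beta_1=\alpha$.

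The one genuine obstacle is this last manoeuvre: recognising that one must carry the full factor $\prod_j\Prob\{C_j\}$ (rather than discard it) and combine the two constraints $\Prob\{G_i\}\le\beta_i$ and $\Prob\{G_i\}+\Prob\{C_i\}\le\gamma+\beta_i$ so that the cross terms telescope to $\gamma P_{i+1}+\beta_i$. Everything else — the union bound from $p^{(i)}$ to $q_i$ (which incidentally shows the argument is insensitive to the data-dependent sample size $n$, since only super-uniformity of the marginals is used), the disjointness bookkeeping for the failure event, and the final geometric sum — is routine.
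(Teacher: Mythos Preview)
Your proof is correct and is essentially the same argument as the paper's. The paper phrases it as backward induction on the conditional probabilities $\Prob\{\text{fail}\mid\text{step }i\text{ reached}\}$ with induction hypothesis $(k+1-i)\beta_i$, while you write out the failure probability as the sum $\sum_i\Prob\{G_i\}\prod_{j<i}\Prob\{C_j\}$ and unroll the recursion $P_i\le\gamma P_{i+1}+\beta_i$; but your $P_i$ is precisely the paper's conditional failure probability (by independence), and both routes pivot on the same algebraic step of combining $\Prob\{G_i\}\le\beta_i$ with $\Prob\{G_i\}+\Prob\{C_i\}\le\gamma+\beta_i$ to obtain $P_i\le\gamma P_{i+1}+\beta_i$.
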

The added effort of Algorithm \ref{alg:sequential} compared to the non-sequential case is modest if the p-values are generated under the null. Assuming that they are exactly uniform, the expected increase in effort under the null for general $k$ compared to $k=1$ is

\begin{equation}
\label{eq:expected_effort}
\Delta \sum_{i=1}^k\gamma^{i-1}=\Delta \gamma \left( \frac{1-\gamma^{k-1}}{1-\gamma} \right).
 \end{equation}
More generally, if only the inequality for p-values under the null is assumed (i.e.\ the probability of a p-value being below any bound q is at most q), then the expected increase in effort is bounded from above by
    \begin{equation}
      \label{eq:effort_bound}
\Delta \sum_{i=2}^k\prod_{j=1}^{i-1}(\gamma+\beta_{j}).
    \end{equation}
Motivated by the simulation study in Appendix \ref{sec:tuning} the default values for Algorithm \ref{alg:sequential} 
in the R-package \emph{mcunit} are $\alpha=10^{-5}$, $k=7$, and $\Delta=4$.
This leads to $\gamma\approx 0.15$, and $\beta_1\approx 1.4\cdot 10^{-6}$,
$\beta_2\approx 9.8\cdot 10^{-6}$, $\beta_3\approx 6.6\cdot 10^{-5}$,
$\beta_4\approx 4.6\cdot 10^{-4}$, $\beta_5\approx 3.1\cdot 10^{-3}$
$\beta_6\approx 2.1\cdot 10^{-2}$, $\beta_7=\gamma\approx 0.15$.

For these default parameter values, both \eqref{eq:expected_effort} and \eqref{eq:effort_bound} give the expected additional effort at around 68.5\%. For $\Delta=1$, and the other parameters being unchanged, both formulas give around 17.1\%. The difference between the two formulas is negligible when $\alpha$ is chosen to be small. 
%

\section{SIMULATIONS}
\label{sec:sims}

To demonstrate the performance of the proposed and existing tests, this section presents the results of a power analysis using a stylized model, and a sampler in which errors have been intentionally introduced. The tests considered are exact two-sample and rank tests, and the methods of \citet{geweke_2004} and \citet{talts_validating_2020}. Although \citet{talts_validating_2020} propose graphically checking the distribution of their rank statistics to the uniform distribution, here a formal Kolmogorov-Smirnov test is used to allow consistent comparisons with other methods.

Consider the model
\begin{equation}
\label{eq:stylised_model}
y \sim \theta_1 + \theta_2 + \epsilon,
\end{equation}
where $\theta := (\theta_1, \theta_2)$ is apriori independent, zero-mean normal with standard deviation $\sigma = 10$. The white noise term $\epsilon$ is independent from $\theta$ and also zero-mean normal but with variance $\sigma_{\epsilon}^2 = 0.1$ . While inference is easy in this model, we consider drawing posterior samples of $\theta$ using a Gibbs sampler. The posterior conditional distributions for $\theta_1$ and $\theta_2$ are normal with expectations
 \begin{equation} \label{eq:conditional_expectation}
\E[\theta_i \mid y, \theta_{j}] = \frac{\sigma^2}{\sigma^2_{\epsilon} + \sigma^2}\left(y - \theta_j \right),
\end{equation}
and variances
\begin{equation} \label{eq:conditional_variance}
\Var(\theta_i \mid y, \theta_j) = \frac{1}{\frac{1}{\sigma_{\epsilon}^2} + \frac{1}{\sigma^2}}.
\end{equation}
The small $\sigma_{\epsilon}$ induces high correlation between $\theta_1$ and $\theta_2$ in the posterior distributions, and so the Gibbs sampler will mix slowly.

\subsection{Mistakes in Full Conditionals}
\label{sec:first_sim}

Two correctly implemented samplers are considered; one uses random scan of the two coordinates, with the other using systematic scan. Three erroneous samplers, all of which use random scan, are also considered. The first two have mistakes in the conditional expectations and variances respectively; $y - \theta_j$ is replaced with $y + \theta_j$ in \eqref{eq:conditional_expectation}, and in \eqref{eq:conditional_variance} the variance terms are replaced with the corresponding standard deviations. The final mistake considered truncates each conditional distribution either to the left or right of its posterior mean. The decision to truncate left or right is random for each distribution. 

Table \ref{tab:first_sim} presents the results of the power analysis. Each entry records an empirical rejection rate for a given test function(s) and scenario, computed by repeating the test $10^4$ times. The nominal false rejection rate of each test was set to $\alpha = 0.01$. Sequential versions of Algorithms \ref{alg:prior} and \ref{alg:ranks} were used because they were found to have higher power than the non-sequential versions. All methods were calibrated to have comparable computational budgets. Please refer to the table for details of all simulation parameters.

 \definecolor{graylight}{rgb}{1.0,1.0,1.0}
 \definecolor{grayllight}{rgb}{1.0,1.0,1.0}
 
 \begin{table*}
 \ra{1.2}
\centering
\scalebox{0.8}{
 \begin{threeparttable}
 \caption{\small Empirical rejection rates from the power analysis described in Section \ref{sec:first_sim}. \label{tab:first_sim}}
\begin{tabular}{lrrrrrr}
\toprule
\toprule
\multirow[t]{2}{*}{Test Function}                                         			& \multicolumn{2}{l}{Correct} 		& 	&	 \multicolumn{3}{l}{Errors}               \\ 
                                                                                                                             \cmidrule{2-3}  \cmidrule{5-7} 
		           										& \multicolumn{1}{l}{Rand. Scan}      		&  \multicolumn{1}{l}{Sys. Scan} 	&        	& \multicolumn{1}{l}{$\E$} & \multicolumn{1}{l}{$\Var$} & \multicolumn{1}{l}{Truncated} \\ \midrule
\rowcolor{graylight}
\multicolumn{7}{l}{\textit{Sequential exact two-sample test with $\Delta = 2$ and $k = 3$.}}	\\
\rowcolor{graylight}
$\theta_1$                										& 	0.009 		&  	 	0.010 		&	& 	1.000 		& 	0.007		&	0.008		\\
\rowcolor{graylight}
$\theta_1^2$              										& 	0.008 		&       	0.009        	&	& 	1.000       		&	0.009      		&	0.011			\\
\rowcolor{graylight}
$\theta_1 \theta_2$       									& 	0.008  		&     	 	0.008     		&	& 	1.000      		& 	0.010       		&	0.011			\\
\rowcolor{graylight}
$\pi(\theta)$             										& 	0.010   		&       	0.011        		&	& 	1.000       		& 	0.008        	&	0.009		\\
\rowcolor{graylight}
$p(y \mid \theta)$ 										& 	0.010  		&       	0.009         	&	& 	1.000       		& 	1.000      		&	0.007		\\
\rowcolor{graylight}
\textbf{All} \tnote{a}         									& 	\textbf{0.007} 	& 		\textbf{0.009} 	&	&	\textbf{1.000}	& 	\textbf{1.000} 	&	\textbf{0.006}	\\
\rowcolor{grayllight}
\multicolumn{7}{l}{\textit{Sequential exact rank test with $\Delta = 2$ and $k = 3$.}}	\\
\rowcolor{grayllight}
$\theta_1$				                						& 	0.009       		&    	 	0.885         	& 	& 	1.000             	& 	0.149          	&	0.869		\\
\rowcolor{grayllight}
$\theta_1^2$              										& 	0.009 		&     		0.869          	&	& 	1.000             	& 	0.163          	&	0.868		\\
\rowcolor{grayllight}
$\theta_1 \theta_2$       									& 	0.008   		&     		0.155          	&	& 	1.000             	& 	0.731          	&	1.000		\\
\rowcolor{grayllight}
$\pi(\theta)$             										& 	0.009      		&     	 	0.158          	&	& 	1.000             	& 	0.738          	&	1.000		\\
\rowcolor{grayllight}
$p(y \mid \theta)$										&	0.012		&		0.012		&	&	1.000		&	1.000		&	0.010		\\
\rowcolor{grayllight}
\textbf{All}  \tnote{a}               								& 	\textbf{0.008} 	&  		\textbf{0.769} 	&	& 	\textbf{1.000}  	& 	\textbf{1.000}	& 	\textbf{1.000}	\\
\rowcolor{graylight}
\multicolumn{7}{l}{\textit{\citet{geweke_2004}.}}	\\
\rowcolor{graylight}
$\theta_1$                										& 	0.310      	 	&    		0.235          	&	& 	1.000             	& 	0.278          	&	0.303		\\
\rowcolor{graylight}
$\theta_1^2$              										& 	0.322   		&     		0.276          	&	& 	1.000             	& 	0.119          	&	0.329		\\
\rowcolor{graylight}
$\theta_1 \theta_2$       									& 	0.105     		&      		0.071          	&	& 	1.000             	& 	0.101          	&	0.106		\\
\rowcolor{graylight}
$\pi(\theta)$             										& 	0.226 		&    		0.206          	&	& 	1.000             	& 	0.284          	&	0.220		\\
\rowcolor{graylight}
$p(y \mid \theta)$ 										& 	0.010     		&     		0.013          	&	& 	1.000             	& 	1.000          	&	0.010		\\
\rowcolor{graylight}
\textbf{All}   \tnote{a}             								& 	\textbf{0.523} 	& 		\textbf{0.441}	& 	& 	\textbf{1.000}	& 	\textbf{1.000} 	& 	\textbf{0.516}	\\ 
\bottomrule
\end{tabular}
\begin{tablenotes}
\footnotesize
\item NOTES: The exact two-sample tests ran with $L = 5$ and $N_1 = N_2 = 5 \times 10^2$, and KS tests were used to compare the two samples of the test statistic(s). The exact rank tests ran with $L = 5$ and had $5 \times 10^2$ simulated rank statistics, using a $\mathcal{X}^2$-test to test the ranks for uniformity. \citet{geweke_2004} used thinning of $5$ and $6 \times 10^2$ MCMC samples.
\item[a] Refers to using all aforementioned test functions and a Bonferroni correction for multiple testing.
\end{tablenotes}
 \end{threeparttable}
}
\end{table*}

As expected, the exact two-sample test (Algorithm \ref{alg:prior}) achieves the nominal rate of $0.01$ for both random scan and systematic scan. The test always detected the wrong expectations and variances. However, only the data likelihood proved able to detect the wrong variance. The variance error only changes the marginal of $\theta$ slightly, and so any test using a statistic involving only the parameters will require many samples to detect the error. This illustrates the importance of considering statistics on both data and parameters, rather than just parameters when using this two-sample test. Notice that the truncation was undetected. Even though all conditional distributions are wrong, the joint distribution of parameter and data is indistinguishable from the correct joint. Therefore, the test could never have higher power than the nominal rejection rate for the truncation error.

The exact rank test, as described in Section \ref{sec:alg2}, achieved the nominal rate for the random scan Gibbs sampler, however was unable to do so for systematic scan. This is expected as the systematic scan sampler is not reversible. The multiple test always detected the wrong expectation, variance and truncation.  

The correlation between the data and parameters poses a problem for the method of \citet{geweke_2004}, and the false rejection rate is too high. The test rejects the correct samplers roughly half of the time. Again, the truncation cannot be detected by this method, for the same reason as for the exact two-sample test.

Finally, Algorithm 2 from \citet{talts_validating_2020} was run using $10^3$ initial steps in each chain to estimate the effective sample size. Because the posterior correlation is high in this model, the effective sample size was overestimated and the false rejection rate was entirely uncontrolled. Given that the errors can be detected very easily, this method is highly inefficient in the cases considered.
 
 \subsection{Mistakes in Assumed Prior}
 \label{sec:second_sim}
 
The second simulation investigates the power of the tests when mistakes are made in the assumed prior for $\theta$. In all cases considered, the prior is bivariate normal with common mean $\mu$, standard deviation $\sigma$ and correlation $\rho$. As described at the beginning of Section \ref{sec:sims} the correct version corresponds to $\mu = 0$, $\sigma = 10$ and $\rho = 0$. Three erroneous priors are considered; a mean shift to $\mu = 10$, a variance scale to $\sigma = 5$, and dependency with $\rho = 0.5$. As before, all tests were parameterized to have comparable computational effort and the nominal false rejection rate set to $\alpha = 0.01$. The results are displayed in Table \ref{tab:second_sim}, which also details the simulation parameters. 

 \begin{table*}
 \ra{1.2}
\centering
\scalebox{0.8}{
\begin{threeparttable}
\caption{\small Empirical rejection rates for the power analysis described in Section \ref{sec:second_sim}. \label{tab:second_sim}}
\begin{tabular}{lrrrr}
\toprule
\toprule
Test                             	& 	Correct 	& 	$\mu = 10$ 	& 	$\sigma = 5$ 	& 	$\rho = 0.5$ 	\\ 
\midrule
\rowcolor{grayllight}
Seq. Exact Two-Sample 	& 	0.007   	& 	0.018      		& 	0.826        	& 	0.049        	\\
\rowcolor{grayllight}
Seq. Exact Rank 		& 	0.011   	& 	0.012     		& 	1.000        		& 	0.551        		\\
\rowcolor{grayllight}
Exact Two-Sample         	& 	0.008   	& 	0.012      		& 	0.601        		& 	0.025        	\\
\rowcolor{grayllight} 
Exact Rank         		& 	0.011   	& 	0.009      		& 	1.000        		& 	0.316       		\\
\rowcolor{grayllight}
\citet{geweke_2004}  	& 	0.101   	& 	0.909      		& 	1.000        		& 	0.229        	\\
\rowcolor{grayllight}
\citet{talts_validating_2020}       		& 	1.000   	& 	1.000      		& 	1.000        		& 	1.000        		\\ 
\bottomrule
\end{tabular}
\begin{tablenotes}
\footnotesize
\item NOTES: Reported results are for multiple testing using all test functions shown in Table \ref{tab:first_sim}. The seq. two-sample test used $L=50$ and $N_1=N_2= 10^3$, while the seq. rank test used $L= 10$, $5$ thinning steps between samples and $10^3$ rank statistics. Both versions used $\Delta = 2$ and $k = 3$. The non-sequential versions were adjusted to achieve a similar computational time under the null. \citet{geweke_2004} used $10^3$ samples with thinning of $50$, and \citet{talts_validating_2020} used $10^2$ initial steps to estimate ESS.
\end{tablenotes}
\end{threeparttable}
}
\end{table*}

Both the exact two-sample and rank tests did well to maintain the nominal rate, and had high power in detecting the scaled variance. They were unable to detect the mean shift because the prior is uninformative and has little effect on the posterior distributions. It seems that the joint distribution tests of \citet{geweke_2004} has a power advantage here because the marginal distribution of the parameters in the samples will tend to the specified wrong prior as the number of MCMC steps goes to infinity. Nonetheless, the false rejection rate is far above the nominal level in our simulation. The method was also worse than Algorithm \ref{alg:ranks} at detecting the dependency.  It appears that the joint distribution tests of \citet{geweke_2004} can perform comparatively well when errors in individual posterior distributions are subtle, but aggregate in such a way that they are detectable in the simulated joint distribution. The errors in this section are designed such that the \citet{geweke_2004} method will, if run long enough, recover a specified (wrong) joint distribution. In more general cases, it is not clear that the errors will be so easily detectable in the joint.
\citet{talts_validating_2020} performed poorly, again due to the autocorrelation in the Gibbs sampler. It rejected every scenario in every test. Obtaining reasonable results using this method would require much more computation than required the other tests.

Finally, we demonstrate how to improve the power of the exact tests for the above analysis. Recall the extension to the two-sample test where the data is resampled each time $\theta$ is updated.  The rank test described in Section \ref{sec:alg2} was also extended, so that with probability $0.5$, the data $y$ rather than $\theta$ is updated in line 5 and 7 of Algorithm \ref{alg:ranks}. The power of these generalized methods are estimated under the wrong prior expectation $\mu = 10$ introduced above. The two-sample test was parameterized to used $L=2 \times 10^3$ and $N_1=N_2=10^3$, while the rank test used $L=10$, thinning of $200$ and $10^3$ rank statistics. The empirical ejection rates were $99.2\%$ and $97.2\%$ respectively, computed by replicating each test $10^3$ times. The empirical rejection rates for the original tests were $30.8\%$ and $29.7\%$ respectively, when both were parameterized to have similar computation time. The power improves because the generalizations define Gibbs samplers on the joint space, and so they have higher power in detecting the `aggregated' error in the joint. Nonetheless, for this amount of computation the method of \citet{geweke_2004} achieved the nominal false rejection rate, and had power of $100\%$.

\section{APPLICATION: REVERSIBLE JUMP-MCMC FOR SIGNAL DECOMPOSITION}
\label{sec:rjmcmc}

\citet{andrieu_joint_1999} propose a Bayesian method to jointly detect and estimate sinusoids in a noisy signal. The number of sinusoids making up the signal is unknown, and the authors propose a reversible-jump MCMC (RJ-MCMC) algorithm to explore the space of models consisting of different numbers of sinusoids. This seminal work precipitated a number of studies applying RJ-MCMC to signal processing problems. Many of these relied on the same Metropolis-Hastings-Green acceptance ratio for `birth' and `death' moves that was derived in \citet{andrieu_joint_1999}, including \citet{andrieu_robust_2001, andrieu_bayesian_2001, larocque_reversible_2002, larocque_particle_2002, ng_wideband_2005, davy_bayesian_2006, hong_joint_2010, schmidt_infinite_2010, rubtsov_time-domain_2007}.

\citet{roodaki_comments_2013} demonstrate that this ratio is in fact erroneous. Through simulation, the authors show that the error leads the sampler to prefer `sparse' models with fewer sinusoids. In this section, we first briefly outline both the model and the sampler proposed in \citet{andrieu_joint_1999}. After discussing the error noted by \citet{roodaki_comments_2013}, we employ the exact tests introduced in Section \ref{sec:exact} to show that this error could easily have been detected with our methods. This example demonstrates the utility of routine use of such tests in advance of publishing results that rely on estimation by MCMC.

\subsection{Model Description}

Consider a data vector $y := (y_0, \ldots, y_{N-1})^t$, which may aggregate multiple sinusoidal signals in addition to random noise. \citet{andrieu_joint_1999} propose  a series of competing models to explain such data, which are indexed by the number $k \geq 0$ of latent sinusoids hidden within the noisy signal. The $k$\textsuperscript{th} ($k > 0$) model is
\begin{equation*}
    y_{t} = \sum_{l=1}^k \left(c_{k,l}\cos(w_{k,l}t ) + s_{k,l} \sin(w_{k,l} t) \right) + \epsilon_{k, t},
\end{equation*}
where $\epsilon_{k,t}$ is white Gaussian noise with variance $\sigma^2_k$. The zero\textsuperscript{th} model corresponds to no latent signal, i.e. $y_t = \epsilon_{0,t}$. It is convenient, particularly when we discuss the priors, to express these models in matrix-vector form
\begin{equation*}
    y = \text{D}(w_k) a_k + \epsilon_k,
\end{equation*}
with radial frequencies $w_k := (w_{k,1}, \ldots, w_{k,k})^t$, amplitudes $a_k := (c_{k,1}, s_{k,1}, \ldots, c_{k,k}, s_{k,k})^t$, and noise $\epsilon_k := (\epsilon_{k,0}, \ldots, \epsilon_{k,N-1})^t$. The $N \times 2k$ design matrix $\text{D}(w_k)$ has odd-column entries $\text{D}(w_k)_{t+1,2l-1} = \cos(w_{k,l}t)$ and even-column entries $\text{D}(w_k)_{t+1,2l} = \sin(w_{k,l}t)$. 

The number of latent signals $k$ is given a Poisson prior with rate $\Lambda$, truncated to $\{0,\ldots, k_{max}\}$, where $k_{max} = \lfloor(N-1)/2\rfloor$. This upper limit prevents dependence in the columns of $D$, which would render $w_k$ difficult to identify. Conditional on $k$, the remaining priors are
\begin{align*}
    \sigma^2_k \mid k &\sim \text{InvGamma}(v_0/2, \gamma_0/2) \\
    w_k \mid k & \sim \text{Uniform}((0,\pi)^k) \\
    a_k \mid k, w_k, \sigma^2_k &\sim \mathcal{N}(0, \delta^2\Sigma_k),
\end{align*}
where $v_0$ and $\gamma_0$ are shape and scale parameters, and $\Sigma_k = \sigma^2_k\left(\text{D}^t(w_k)\text{D}(w_k)\right)^{-1}$. The prior on $a_k$ is known as the \textit{g-prior}.

\subsection{The Sampler and its Error}

\citet{andrieu_joint_1999} integrate $a_k$ and $\sigma^2_k$ out of the posterior and target $p(k, w_k | y)$, which is known up to a normalizing constant, and defined on $\Theta := \cup_{k=0}^{k_{max}} \{k\} \times (0,\pi)^k$. Their sampler employs four Markov kernels. Two are \textit{within-model} kernels, designed to update the frequencies $w_k$ while keeping $k$ fixed. The remaining two, namely the `birth' and `death' moves, are \textit{between-model} kernels, and propose moves that traverse subspaces of different dimensions; adding and deleting sinusoids respectively. 

The within-model kernels are standard Metropolis Hastings updates and alter only one component of $w_k$ at a time. The first perturbs the current state with a symmetric Gaussian proposal with scale $\sigma_{rw}$, and is referred to hereon-in as the local frequency kernel (LFK). The other uses a global proposal based on the Fourier coefficients of $y$. These are computed with the discrete Fourier transform, however, $y$ can be padded to length $N_p > N$ in order to improve interpolation. We refer to this as the global frequency kernel (GFK).

Birth and death kernels allow the chain to transition from dimension $k$ to $k+1$ or $k-1$ respectively. Suppose the current state of the chain is $(k, w_k)$. A birth move proposes a new frequency uniformly on $(0,\pi)$ and appends it to $w_k$, while a death move attempts to delete a component of $w_k$ randomly. To ensure reversibility with respect to $p(k, w_k | y)$, the authors propose to accept a birth proposal with probability $\min(1,r)$, where
\begin{equation} \label{eq:r_orig}
    r = \left( \frac{\gamma_0 + y^t\text{P}_ky}{\gamma_0 + y^t\text{P}_{k+1}y}\right)^{(N+v_0)/2} \frac{1}{(k+1)(1+\delta^2)},
\end{equation}
and
\begin{equation*}
    \text{P}_k = \text{I}_{N} - \frac{\delta^2}{1 + \delta^2} \text{D}(w_k)\left(\text{D}(w_k)^t \text{D}(w_k)\right)^{-1}\text{D}(w_k)^t,
\end{equation*}
for $k \geq 1$ and $\text{P}_0 = \text{I}_N$. Similarly, a death move is accepted with probability $\min(1,r^{-1})$.

\citet{roodaki_comments_2013} prove that \eqref{eq:r_orig} is erroneous and that the ratio $r$ should be replaced by $(1+k)r$. The authors demonstrate that the error leads the sampler to target the posterior where the prior on $k$ is 
\begin{equation}
    p(l) \propto \frac{e^{-\Lambda} \Lambda^l}{\left(l!\right)^2},
\end{equation}
for $l \in \{0,\ldots, k_{max}\}$. This is an \textit{accelerated Poisson} distribution, and places greater mass on small values. The implication is that all articles using the erroneous sampler place more emphasis on sparse models than intended.

\subsection{Testing the Sampler}

Our approach is to test the constituent kernels of the sampler individually, insofar as this is possible. We first test LFK and GFK separately on a problem with a known number of sinusoids, i.e. where $k$ is fixed. In the second stage, $k$ is treated as unknown, and the overall RJ-MCMC sampler is tested. It is difficult to test the birth and death kernels individually because they have a state-dependent selection probability, and are not irreducible by themselves. 

\subsubsection{Testing the Within-Model Kernels}
\label{sec:freq_test}

LFK and GFK are tested separately using the sequential two-sample and rank tests. Throughout, we assume that there is one sinusoid, i.e. $k=1$, and so the frequency $w_1$ reduces to a scalar in $(0,\pi)$. This could be extended to $k > 1$ by embedding the kernels in random scan samplers in order to update all frequency components. To fully define the joint distribution of data and parameters, we set $N = 64$, $v_0 = 10$, $\gamma_0 = 10$, and $\delta = 8$. Ideally, the kernel parameters $\sigma_{rw}$ and $N_p$ should be set so that the kernels mix fast and the tests have high power. Through experimentation, we determined that $\sigma_{rw} = 1/50$ led to LFK mixing well. For GFK, we use zero padding by letting $N_p = 4N$. This helps to better interpolate the Fourier frequencies, and empirically leads to lower rejection rates. 

We use $w_1$ as the sole test function in all tests. For the two-sample test we let $N_1=N_2=10^4$, and propagate the direct samples $L = 10^2$ steps with the kernels. The two-sample KS test is used to compare the direct and indirect samples. Given, however, that $w_1$ is uniform on $(0,\pi)$ under the prior, one could replace the direct samples with the uniform distribution function, and instead employ a one-sided KS test. For the rank tests, we use $10^4$ replications, $L=10$ and thinning of $10$. All tests use the default sequential parameters described in Section \ref{sec:sequential}; ensuring a false rejection rate of less than $10^{-5}$. 

The sequential two-sample test applied to LFK gave a p-value of $0.076$ in the first iteration, which triggered a second sequential iteration using $4 \times 10^4$ iterations. This resulted in a p-value of 0.56 and so no inconsistency was detected. Applying the same test to GFK led to a p-value of 0.36 in the first iteration. The sequential rank test gave p-values of 0.55 and 0.69 for the LFK and GFK kernels respectively in the first iteration. Therefore, in all cases, no error was detected.

\subsubsection{Testing the Full Sampler}
\label{sec:full_test}

Having tested the within-model kernels individually, we now test the full sampler that includes LFK, GFK, birth and death moves. For details on the overall algorithm please refer to \citet{andrieu_joint_1999}. All tests use the number of sinusoids $k$ as the test function. For the two-sample tests we let $N_1 = N_2 = 10^3$, and for the rank tests we use $10^3$ replications. All other parameter values remain the same as in the previous section.

We first test the original, erroneous sampler using the truncated Poisson prior on $k$ with rate $\Lambda = 3$. Both the two-sample and rank tests failed on the first iteration with p-values of $1.4 \times 10^{-6}$ and $2 \times 10^{-173}$ respectively, all but proving that the sampler contains an error.  The top panel of Figure \ref{fig:rjmcmc_results} shows the results of these tests. The empirical distribution of $k$ from the indirect samples is skewed to the left, as we expect given \citet{roodaki_comments_2013}'s finding that the sampler is in effect assuming an accelerated Poisson prior on $k$. There is also a strong skew in the rank statistics. We performed the same tests using the accelerated Poisson prior on $k$. As expected, no discrepancy was detected in this case. 

\begin{figure*}[tbp]
    \centering
    \resizebox{\textwidth}{!}{
    \input{figures/rjmcmc_results.tex}
    }
    \caption{\small Results of the two-sample and rank tests applied to the full RJ-MCMC sampler.}
    \label{fig:rjmcmc_results}
\end{figure*}

Next we replaced the ratio $r$ with $(1+k)r$ in the acceptance rate of birth and death moves, as suggested by \citet{roodaki_comments_2013}. Neither test detected an error when the correct truncated Poisson prior was used, however errors were detected when the accelerated Poisson prior was used. All results are shown in Figure \ref{fig:rjmcmc_results}.

\section{DISCUSSION}
\label{sec:conclusion}

This article has proposed two tests of MCMC implementations, which
 are unique in being exact; that is, the false rejection rate can be controlled. 
 This property is leveraged to propose a sequential testing procedure which 
 allows for high power and arbitrarily low false rejection rates, for example 
 $10^{-5}$. Such a procedure is useful for unit testing both MCMC and 
 more general Monte Carlo implementations, where one wants to minimize 
 the risk of rejecting a correct sampler. 
 
The performance of the two tests has been tested in a simulation study,
 and compared to other methods in the literature. The study validates the 
 ability of the tests to achieve the correct nominal level, and generally shows 
 favorable performance of the methods. The exact rank test is shown to have 
 high power over other methods when there are large errors within each conditional 
 distribution which may not aggregate to an easily detectable error in the joint 
 distribution of data and parameters. On the flip side, we have tested small 
 errors in the conditionals which are detectable in the joint. In this latter case, 
 the Geweke method appears to have a power advantage. However, we have 
 demonstrated extensions to the tests which improve their power.

\bibliographystyle{apalike}
\bibliography{references} 

\begin{thebibliography}{}

\bibitem[Andrieu et~al., 2001a]{andrieu_bayesian_2001}
Andrieu, C., Barat, E., and Doucet, A. (2001a).
\newblock Bayesian deconvolution of noisy filtered point processes.
\newblock {\em IEEE Transactions on Signal Processing}, 49(1):134--146.

\bibitem[Andrieu et~al., 2001b]{andrieu_robust_2001}
Andrieu, C., De~Freitas, N., and Doucet, A. (2001b).
\newblock Robust {Full} {Bayesian} {Learning} for {Radial} {Basis} {Networks}.
\newblock {\em Neural Computation}, 13(10):2359--2407.

\bibitem[Andrieu and Doucet, 1999]{andrieu_joint_1999}
Andrieu, C. and Doucet, A. (1999).
\newblock Joint {Bayesian} model selection and estimation of noisy sinusoids
  via reversible jump {MCMC}.
\newblock {\em IEEE Transactions on Signal Processing}, 47(10):2667--2676.

\bibitem[Besag and Clifford, 1989]{besag_clifford_1989}
Besag, J. and Clifford, P. (1989).
\newblock Generalized {Monte} {Carlo} significance tests.
\newblock {\em Biometrika}, 76(4):633--642.

\bibitem[Cook et~al., 2006]{cook_2006}
Cook, S.~R., Gelman, A., and Rubin, D.~B. (2006).
\newblock Validation of {Software} for {Bayesian} {Models} {Using} {Posterior}
  {Quantiles}.
\newblock {\em Journal of Computational and Graphical Statistics},
  15(3):675--692.

\bibitem[Cowles and Carlin, 1996]{cowles_1996}
Cowles, M.~K. and Carlin, B.~P. (1996).
\newblock Markov {Chain} {Monte} {Carlo} {Convergence} {Diagnostics}: {A}
  {Comparative} {Review}.
\newblock {\em Journal of the American Statistical Association}, 91:883--904.

\bibitem[Davy et~al., 2006]{davy_bayesian_2006}
Davy, M., Godsill, S., and Idier, J. (2006).
\newblock Bayesian analysis of polyphonic western tonal music.
\newblock {\em The Journal of the Acoustical Society of America},
  119(4):2498--2517.

\bibitem[Duane et~al., 1987]{duane_1987}
Duane, S., Kennedy, A.~D., Pendleton, B.~J., and Roweth, D. (1987).
\newblock Hybrid {Monte} {Carlo}.
\newblock {\em Physics Letters B}, 195(2):216--222.

\bibitem[Gandy, 2009]{gandy_sequential_2009}
Gandy, A. (2009).
\newblock Sequential {Implementation} of {Monte} {Carlo} {Tests} {With}
  {Uniformly} {Bounded} {Resampling} {Risk}.
\newblock {\em Journal of the American Statistical Association},
  104(488):1504--1511.

\bibitem[Gandy and Veraart, 2016]{gandy_2016}
Gandy, A. and Veraart, L. A.~M. (2016).
\newblock A {Bayesian} {Methodology} for {Systemic} {Risk} {Assessment} in
  {Financial} {Networks}.
\newblock {\em Management Science}, 63(12):4428--4446.

\bibitem[Gelman, 2017]{gelman_2017}
Gelman, A. (2017).
\newblock Correction to: {Validation} of {Software} for {Bayesian} {Models}
  using {Posterior} {Quantiles}.
\newblock {\em Journal of Computational and Graphical Statistics}, 26(4):940.

\bibitem[Geman and Geman, 1984]{geman_geman_1984}
Geman, S. and Geman, D. (1984).
\newblock Stochastic {Relaxation}, {Gibbs} {Distributions}, and the {Bayesian}
  {Restoration} of {Images}.
\newblock {\em IEEE Transactions on Pattern Analysis and Machine Intelligence},
  PAMI-6(6):721--741.

\bibitem[Geweke, 2004]{geweke_2004}
Geweke, J. (2004).
\newblock Getting {It} {Right}.
\newblock {\em Journal of the American Statistical Association},
  99(467):799--804.

\bibitem[Girolami and Calderhead, 2011]{girolami_calderhead_2011}
Girolami, M. and Calderhead, B. (2011).
\newblock Riemann manifold {Langevin} and {Hamiltonian} {Monte} {Carlo}
  methods.
\newblock {\em Journal of the Royal Statistical Society: Series B (Statistical
  Methodology)}, 73(2):123--214.

\bibitem[Hong et~al., 2010]{hong_joint_2010}
Hong, M., Bugallo, M.~F., and Djuric, P.~M. (2010).
\newblock Joint {Model} {Selection} and {Parameter} {Estimation} by
  {Population} {Monte} {Carlo} {Simulation}.
\newblock {\em IEEE Journal of Selected Topics in Signal Processing},
  4(3):526--539.

\bibitem[Larocque and Reilly, 2002]{larocque_reversible_2002}
Larocque, J.-R. and Reilly, J. (2002).
\newblock Reversible jump {MCMC} for joint detection and estimation of sources
  in colored noise.
\newblock {\em IEEE Transactions on Signal Processing}, 50(2):231--240.

\bibitem[Larocque et~al., 2002]{larocque_particle_2002}
Larocque, J.-R., Reilly, J., and Ng, W. (2002).
\newblock Particle filters for tracking an unknown number of sources.
\newblock {\em IEEE Transactions on Signal Processing}, 50(12):2926--2937.

\bibitem[Ng et~al., 2005]{ng_wideband_2005}
Ng, W., Reilly, J., Kirubarajan, T., and Larocque, J.-R. (2005).
\newblock Wideband array signal processing using {MCMC} methods.
\newblock {\em IEEE Transactions on Signal Processing}, 53(2):411--426.

\bibitem[Roberts and Stramer, 2002]{roberts_stramer_2002}
Roberts, G.~O. and Stramer, O. (2002).
\newblock Langevin {Diffusions} and {Metropolis}-{Hastings} {Algorithms}.
\newblock {\em Methodology And Computing In Applied Probability},
  4(4):337--357.

\bibitem[Roodaki et~al., 2013]{roodaki_comments_2013}
Roodaki, A., Bect, J., and Fleury, G. (2013).
\newblock Comments on “{Joint} {Bayesian} {Model} {Selection} and
  {Estimation} of {Noisy} {Sinusoids} {Via} {Reversible} {Jump} {MCMC}”.
\newblock {\em IEEE Transactions on Signal Processing}, 61(14):3653--3655.

\bibitem[Rubtsov and Griffin, 2007]{rubtsov_time-domain_2007}
Rubtsov, D.~V. and Griffin, J.~L. (2007).
\newblock Time-domain {Bayesian} detection and estimation of noisy damped
  sinusoidal signals applied to {NMR} spectroscopy.
\newblock {\em Journal of Magnetic Resonance}, 188(2):367--379.

\bibitem[Runeson, 2006]{runeson_survey_2006}
Runeson, P. (2006).
\newblock A survey of unit testing practices.
\newblock {\em IEEE software}, 23(4):22--29.

\bibitem[Schmidt and Mørup, 2010]{schmidt_infinite_2010}
Schmidt, M.~N. and Mørup, M. (2010).
\newblock Infinite non-negative matrix factorization.
\newblock In {\em 18th {European} {Signal} {Processing} {Conference}}, pages
  905--909.

\bibitem[Sokal, 1997]{sokal_monte_1997}
Sokal, A. (1997).
\newblock Monte {Carlo} {Methods} in {Statistical} {Mechanics}: {Foundations}
  and {New} {Algorithms}.
\newblock In DeWitt-Morette, C., Cartier, P., and Folacci, A., editors, {\em
  Functional {Integration}: {Basics} and {Applications}}, {NATO} {ASI}
  {Series}, pages 131--192. Springer US, Boston, MA.

\bibitem[Talts et~al., 2020]{talts_validating_2020}
Talts, S., Betancourt, M., Simpson, D., Vehtari, A., and Gelman, A. (2020).
\newblock Validating {Bayesian} {Inference} {Algorithms} with
  {Simulation}-{Based} {Calibration}.
\newblock {\em arXiv:1804.06788 [stat]}.

\bibitem[Wickham, 2011]{wickham_testthat_2011}
Wickham, H. (2011).
\newblock testthat: {Get} {Started} with {Testing}.
\newblock {\em The R Journal}, 3:5--10.

\bibitem[{Wikipedia contributors}, 2019]{wikipedia_contributors_list_2019}
{Wikipedia contributors} (2019).
\newblock List of unit testing frameworks — {Wikipedia}, {The} {Free}
  {Encyclopedia}.

\end{thebibliography}

\section*{APPENDIX A: PROOFS}

\renewcommand{\thesubsection}{\Alph{subsection}}
\label{sec:proofs}

\begin{proof}[Proof of Lemma \ref{lemma:besag}]
For $k\in 1,\dots,N$, 
\begin{equation*}
\begin{split}
\prob \{R_M = k \} & = \sum_{m=1}^{N} \prob \{R_M = k \mid  M = m\} \prob \{M=m\}\\
& = \frac{1}{N} \sum_{m=1}^{N}  \prob \{R_m = k \} = \frac{1}{N},
\end{split}
\end{equation*}
where the last equality holds because exactly one element of $(R_1,...,R_N)$ must equal $k$. In other words, the events $\{R_m = k\}$ partition the sample space.
\end{proof}

\begin{proof}[Proof of Proposition \ref{prop:uniform_null}]
Consider the variables involved in one evaluation of Algorithm \ref{alg:ranks}. The proposition assumes that the kernel $K_y$ is $\pi(\theta \mid  y)$-reversible. Letting $f$ denote the joint distribution of $\theta_{1:N}$ then for $m > 1$,
\begin{equation*}
\begin{split}
f(\theta_{1:N} \mid M = m, y) & = \pi(\theta_m \mid y) \left( \prod_{i=1}^{m-1} K_y(\theta_{m-i+1}, \theta_{m-i} ) \right) \left( \prod_{j=m+1}^{N} K_y(\theta_{j-1}, \theta_{j} ) \right)\\
& = \pi(\theta_m \mid y) K(\theta_m, \theta_{m-1}) \left( \prod_{i=1}^{m-2} K_y(\theta_{m-1 -i+1}, \theta_{m-1 + i} ) \right) \left( \prod_{j=m+1}^{N} K_y(\theta_{j-1}, \theta_{j} ) \right)\\
& = \pi(\theta_{m-1} \mid y) K(\theta_{m-1}, \theta_{m}) \left( \prod_{i=1}^{m-2} K_y(\theta_{m-1 -i+1}, \theta_{m-1 + i} ) \right) \left( \prod_{j=m+1}^{N} K_y(\theta_{j-1}, \theta_{j} ) \right)\\
& = \pi(\theta_{m-1} \mid y) \left( \prod_{i=1}^{m-2} K_y(\theta_{m-1 -i+1}, \theta_{m-1 + i} ) \right) \left( \prod_{j=m}^{N} K_y(\theta_{j-1}, \theta_{j} ) \right)\\
&= f(\theta_{1:N} \mid M = m-1, y).
\end{split}
\end{equation*}
This implies that the distribution of $\theta_{1:N}$ is independent of $M$. Since additionally $R$ is also assumed independent of $M$,  both $R(\theta_{1:L})$ and $M$ satisfy the conditions of Lemma \ref{lemma:besag} and so $R_M(\theta_{1:L})$ is uniformly distributed.
\end{proof}

\begin{proof}[Proof of Theorem \ref{thm:size_bound}] We use induction from $j=k$ to $j=1$ to show that
  \begin{equation}
    \label{eq:indhyp}
   \Prob \{ \text{fail} \mid \text{step } j \text{ reached} \}\leq
  (k+1-j)\beta_{j} ,
  \end{equation}
 where $\beta_j=\beta/\gamma^{j-1}$ is as defined in Algorithm \ref{alg:sequential}.

 First, by the usual arguments for the Bonferroni correction, 
    $\Prob \{ q_{i}\leq p \} \leq p$ for all $p \in [0,1]$ and for $i=1,\dots,k$.
This,  immediately shows that (\ref{eq:indhyp}) holds for  $j=k$.

    To show that (\ref{eq:indhyp})  holds for $j=i\in \{1,\dots,k-1\}$ given that it holds for $j=i+1$, we argue as follows.
    Let $A_{i}=\{\beta_{i}<q_i\leq \gamma+\beta_i\}$ and $B_i=\{q_i\leq \beta_{i}\}$. Using the arguments for the Bonferroni correction again gives $\Prob \{B_i \}\leq \beta_i$
    and $\Prob \{ A_i \}\leq \gamma+\beta_i-\Prob \{B_i \}$.
Then 
\begin{align*}
  \Prob \{&\text{fail} \mid \text{step } i \text{ reached}\}=\Prob\{B_{i}\}+\Prob\{A_{i}, \text{fail} \mid \text{step }i\text{ reached}\}\\
  &=\Prob \{B_{i} \}+\Prob\{A_{i}\}\Prob\{\text{fail} \mid \text{step i+1 reached}\}\\
  &\leq\Prob\{B_{i}\}+(\gamma+\beta_i-\Prob\{B_i)\}\Prob\{\text{fail} \mid \text{step i+1 reached}\}\\
  &=\gamma \Prob\{\text{fail} \mid \text{step i+1 reached}\} + \beta_i\Prob\{\text{fail} \mid \text{step i+1 reached}\}+\Prob\{B_{i}\}(1-\Prob\{\text{fail} \mid \text{step i+1 reached}\})\\
        &\leq \gamma \Prob \{\text{fail} \mid \text{step } i+1 \text{ reached} \} + \beta_i\\
          &\leq\gamma (k+1-(i+1))\beta_{i+1}+\beta_{i}=(k+1-i)\beta_i
\end{align*}
Thus using  (\ref{eq:indhyp}) for $i=1$ gives $\Prob \{ \text{fail} \} \leq k\beta_1=k\beta=\alpha.$    
  \end{proof}

\section*{APPENDIX B: TUNING SEQUENTIAL PARAMETERS}  
\label{sec:tuning}
 
We use a simulation study to propose default parameters for the sequential 
 tests. The classical goodness-of-fit setting is considered: independent and identically distributed 
 samples from can be generated and the task is to test if the samples derive from a
standard normal distribution.  The two-sided
Kolmogorov-Smirnov test is used to test this.

The sample size for $k=1$ and $\Delta=1$ (i.e. the non-sequential
setting) was chosen to be $10^4$. Sample sizes for other settings
were adjusted using (\ref{eq:effort_bound}) so that under the null the
computational effort were identical. $\alpha$ is set at $10^{-5}$ to replicate the situation where
we only want very few rejections. 

\begin{table}[tbp]
  \caption{\small Power of the sequential procedure using a KS test on iid data. Null is $\mathcal{N}(0,1)$.}\label{tab:seqrejprob}
  \footnotesize
  \centering
\begin{tabular}{rrrrrrrr}
  \hline 
 & N(0,1),$\alpha$=0.01 & N(0,1) & N(0.05,1) & N(0.03,1) & N(0.02,1) & N(0,$0.95^2$) & N(0,$0.97^2$) \\ 
  \hline
k=1,$\Delta$=1 & 0.011 & 0.000 & 0.415 & 0.028 & 0.003 & 0.007 & 0.000 \\ 
  k=3,$\Delta$=1 & 0.010 & 0.000 & 0.939 & 0.202 & 0.016 & 0.380 & 0.004 \\ 
  k=3,$\Delta$=2 & 0.009 & 0.000 & 0.967 & 0.448 & 0.061 & 0.681 & 0.025 \\ 
  k=3,$\Delta$=4 & 0.009 & 0.000 & 0.960 & 0.529 & 0.156 & 0.658 & 0.109 \\ 
  k=5,$\Delta$=1 & 0.009 & 0.000 & 0.974 & 0.285 & 0.026 & 0.615 & 0.008 \\ 
  k=5,$\Delta$=2 & 0.010 & 0.000 & 0.986 & 0.583 & 0.100 & 0.861 & 0.080 \\ 
  k=5,$\Delta$=4 & 0.009 & 0.000 & 0.979 & 0.632 & 0.233 & 0.808 & 0.227 \\ 
  k=7,$\Delta$=1 & 0.009 & 0.000 & 0.988 & 0.392 & 0.035 & 0.876 & 0.035 \\ 
  k=7,$\Delta$=2 & 0.010 & 0.000 & 0.990 & 0.692 & 0.141 & 0.959 & 0.231 \\ 
  k=7,$\Delta$=4 & 0.011 & 0.000 & 0.975 & 0.702 & 0.286 & 0.887 & 0.408 \\ 
  k=9,$\Delta$=1 & 0.008 & 0.000 & 0.987 & 0.384 & 0.035 & 0.916 & 0.054 \\ 
  k=9,$\Delta$=2 & 0.009 & 0.000 & 0.990 & 0.673 & 0.124 & 0.963 & 0.266 \\ 
  k=9,$\Delta$=4 & 0.010 & 0.000 & 0.965 & 0.693 & 0.252 & 0.892 & 0.430 \\ 
  k=11,$\Delta$=1 & 0.010 & 0.000 & 0.985 & 0.364 & 0.027 & 0.919 & 0.063 \\ 
  k=11,$\Delta$=2 & 0.010 & 0.000 & 0.988 & 0.636 & 0.105 & 0.966 & 0.267 \\ 
  k=11,$\Delta$=4 & 0.008 & 0.000 & 0.949 & 0.674 & 0.198 & 0.891 & 0.420 \\ 
   \hline
\end{tabular}
\\
 $n=10^4$ for the non-sequential test ($k=1,\Delta=1$); other $n$ adjusted to give same expected effort under null,   $\alpha=10^{-5}$, unless otherwise indicated.
\end{table}

\begin{table}[tbp]
  \caption{\small Power of the sequential procedure using a KS test on iid data. Null is $\mathcal{N}(0,1)$.}\label{tab:seqrejprob1000}
  \footnotesize
  \centering
\begin{tabular}{rrrrrrrr}
  \hline
 & N(0,1),$\alpha$=0.01 & N(0,1) & N(0.15,1) & N(0.1,1) & N(0.05,1) & N(0,$0.85^2$) & N(0,$0.9^2$) \\ 
  \hline
k=1,$\Delta$=1 & 0.009 & 0.000 & 0.324 & 0.039 & 0.001 & 0.006 & 0.000 \\ 
  k=3,$\Delta$=1 & 0.009 & 0.000 & 0.902 & 0.249 & 0.004 & 0.349 & 0.006 \\ 
  k=3,$\Delta$=2 & 0.009 & 0.000 & 0.937 & 0.522 & 0.014 & 0.668 & 0.058 \\ 
  k=3,$\Delta$=4 & 0.009 & 0.000 & 0.939 & 0.576 & 0.050 & 0.658 & 0.162 \\ 
  k=5,$\Delta$=1 & 0.008 & 0.000 & 0.948 & 0.348 & 0.006 & 0.604 & 0.019 \\ 
  k=5,$\Delta$=2 & 0.010 & 0.000 & 0.971 & 0.655 & 0.022 & 0.848 & 0.145 \\ 
  k=5,$\Delta$=4 & 0.009 & 0.000 & 0.959 & 0.677 & 0.085 & 0.817 & 0.295 \\ 
  k=7,$\Delta$=1 & 0.011 & 0.000 & 0.973 & 0.481 & 0.006 & 0.887 & 0.077 \\ 
  k=7,$\Delta$=2 & 0.011 & 0.000 & 0.983 & 0.759 & 0.031 & 0.952 & 0.372 \\ 
  k=7,$\Delta$=4 & 0.009 & 0.000 & 0.958 & 0.744 & 0.095 & 0.890 & 0.487 \\ 
  k=9,$\Delta$=1 & 0.009 & 0.000 & 0.972 & 0.468 & 0.005 & 0.917 & 0.112 \\ 
  k=9,$\Delta$=2 & 0.009 & 0.000 & 0.985 & 0.738 & 0.025 & 0.962 & 0.412 \\ 
  k=9,$\Delta$=4 & 0.009 & 0.000 & 0.949 & 0.725 & 0.072 & 0.883 & 0.531 \\ 
  k=11,$\Delta$=1 & 0.008 & 0.000 & 0.968 & 0.441 & 0.004 & 0.921 & 0.127 \\ 
  k=11,$\Delta$=2 & 0.009 & 0.000 & 0.977 & 0.712 & 0.019 & 0.967 & 0.418 \\ 
  k=11,$\Delta$=4 & 0.009 & 0.000 & 0.936 & 0.722 & 0.047 & 0.883 & 0.525 \\ 
   \hline
\end{tabular}
\\
 $n=10^3$ for the non-sequential test ($k=1,\Delta=1$),   $\alpha=10^{-5}$, unless otherwise indicated.
\end{table}

Results are in Table \ref{tab:seqrejprob}, based on $10^4$ repeated
tests. The first two columns are under the null i.e.\ we would only
expect the nominal number of rejections. This seems to be roughly the
case.

Table \ref{tab:seqrejprob1000} is similar to  Table \ref{tab:seqrejprob}, with the exception that the sample size for  $k=1$ and $\Delta=1$ (i.e. the non-sequential
setting) was chosen to be $10^3$ and that some different alternative have been considered. 

For the alternatives there is a very substantial increase in terms of
power compared to the non-sequential approach ($k=1,\Delta=1$).
Increasing the sample size at the second step seems beneficial -
$\Delta=2$ and $\Delta=4$ seem to be doing better than $\Delta=1$ in
the simulation results.  Furthermore, the number of sequential steps should be large (at least $k\geq 5$).

An overall good performance seems to be achieved by using $k=7$ and
$\Delta=4$.  Therefore, these are the default settings used in our
R-package.

\end{document}